\theoremstyle{definition}
\newtheorem{lemma}{Lemma}
\newtheorem{observation}{Observation}
\begin{document}


\title{Geometric criterion for separability based on local measurement}

\author{Aryaman A. Patel}
\email{13me121.aaryaman@nitk.edu.in}
\author{Prasanta K. Panigrahi}%
 \email{pprasanta@iiserkol.ac.in}
\affiliation{National Institute of Technology Karnataka, Surathkal, Mangalore - 575025, Karnataka, India.}
\affiliation{Indian Institute of Science Education and Research Kolkata, Mohanpur - 741246, West Bengal, India.}

%


\date{\today}

\begin{abstract}
A geometric understanding of entanglement is proposed based on local measurements. Taking recourse to the general structure of density matrices in the framework of Euclidean geometry, we first illustrate our approach for bipartite Werner states. It is demonstrated that separable states satisfy certain geometric constraints that entangled states do not. A separability criterion for multiparty Werner states of arbitrary dimension is derived. This approach can be used to determine separability across any bipartition of a general density matrix and leads naturally to a computable measure of entanglement for multiparty pure states. It is known that all density matrices within a certain distance of the normalized identity are separable. This distance is determined for a general setting of $n$ qudits, each of dimension $d$. 
\end{abstract}

\pacs{Valid PACS appear here}
\maketitle


\section{\label{sec:level1}Introduction}

Quantifying and characterizing entanglement of a many body quantum system is an active area of research. Entanglement witnesses \cite{Horodecki, Terhal} have been identified for broad categorization of multiparty states. Apart from understanding the nature of correlation embodied in entanglement, the fact that it is an important resource in many quantum information and communication protocols makes its study significant \cite{nielsen, mosca}. Understanding of the geometry of the density matrices is very crucial for characterizing entanglement. \\
There have been various approaches to analyse the geometry of the quantum state space \cite{caves,monge,zyc} and also, entanglement measures based on geometry\cite{hey,ozawa,joag}. Recently, a measure of generalized concurrence has been proposed based on Lagrange wedge product \cite{Bhaskara}.  A Lie algebraic approach has naturally led to entanglement witness applying the positive partial transpose criterion on Casimir operator\cite{prasanta}.\\
\\
For convenience, the normalized matrix of order $N$ is at times denoted by $\mathbb{I}_N$ and the $N^2-1$ dimensional ball of which the convex set of density matrices of order $N$ is a subset, is denoted by $\mathbb{B}^{N^2-1}$. For example, in the two dimensional case, the Bloch ball is denoted by $\mathbb{B}^3$. An $n$ dimensional cross section of $\mathbb{B}^{N^2-1}$ is denoted by $B^n$.
\\
The Hilbert-Schmidt (Euclidean) distance between any two Hermitian matrices $\rho$ and $\sigma$ is given by:
\begin{equation*}
    D(\rho,\sigma) = \sqrt{\textrm{Tr}(\rho-\sigma)^2}
\end{equation*}
In this framework, the set of all density matrices of order $N=2$ is a closed ball of radius $\frac{1}{\sqrt{2}}$ centered at the normalized identity matrix $\frac{1}{2}\mathbb{I}$, called the Bloch ball. The boundary of this ball i.e., the Bloch sphere contains all $2\times 2$ density matrices of norm 1 i.e., the pure states. Antipodal points on the sphere represent orthogonal matrices. Each diameter of the Bloch sphere can thus be treated as a linearly independent or orthogonal basis. For $N>2$, the set of density matrices is no longer a closed ball. This set is however, always a convex set and is homeomorphic to a closed ball. All density matrices are Hermitian matrices. Therefore, a density matrix of order $N$ has $N$ real diagonal entries and $\frac{N(N-1)}{2}$ complex non-diagonal entries. The number of real parameters is thus $N^2$. The normalization constraint $\textrm{Tr}(\rho)=1$ reduces the number of real parameters by 1.  The set of density matrices of order $N$ is in general a compact set embedded in $N^2-1$ dimensional Euclidean space. This set always admits the regular $N-1$ simplex, which is an orthogonal basis just as a diameter of the Bloch sphere in the $N=2$ case.\\
\\
It is convenient to treat the set of density matrices of order $N$ as a subset of the closed $N^2-1$ dimensional ball of radius $\sqrt{\frac{N-1}{N}}$ centered at the normalized identity. The convex hull of a basis is represented by a regular $n$ simplex centered at $\frac{1}{N}\mathbb{I}$ and circumscribed by $\mathbb{B}^{N^2-1}$, where $N-1\le n\le N^2-1$. Each density matrix can thus be treated as a point in a simplex whose vertices are pure states. It follows that if a density matrix $\rho$ lies in a simplex whose vertices are separable states then $\rho$ is separable. Let $S$ be the set density matrix of a composite system. Then, an orthogonal basis spanned by pure separable states in $S$ can be `constructed' from orthogonal bases of the constituents of $S$. For example, the set of all two qubit density matrices is the set of density matrices of order 4, a subset of $\mathbb{B}^{15}$. We label the two qubits $A$ and $B$ for convenience. Consider two density matrices $\sigma_A$ and $\sigma^p_A$ orthogonal to each other i.e. they are antipodal points on the Bloch sphere of $A$ and similarly $\pi_B$ and $\pi^p_B$ on the Bloch sphere of $B$. The density matrices $\sigma_A\otimes\pi_B$, $\sigma_A\otimes\pi^p_B$, $\sigma^p_A\otimes\pi_B$ and $\sigma^p_A\otimes\pi^p_B$ are vertices of a regular tetrahedron circumscribed by $\mathbb{B}^{15}$ and centered at the normalized identity of order 4. Similarly, a regular 8-simplex circumscribed by $\mathbb{B}^{80}$ with pure separable two qutrit states as its vertices can be constructed from two equilateral triangles circumscribed by 7-dimensional spheres centered at normalized identities of order 3; a regular 15-simplex circumscribed by $\mathbb{B}^{15}$ with pure separable two qubit states as vertices  can be constructed from two regular tetrahedrons circumscribed by Bloch spheres, etc. It follows that all density matrices that lie in such constructed simplices are separable. Note that a regular simplex is the only convex set such that any given point can be uniquely expressed as a mixture of pure states\cite{beng}.\\
\\
The set of density matrices of order greater than 2 has a complex and interesting structure. We study this structure in the framework of Euclidean geometry and determine the necessary and sufficient conditions for a density matrix to be separable across any bipartition, by making use of local measurements and applying geometric constraints. We show that a two qudit Werner state is separable iff the fraction of the maximally entangled state $p$ satisfies $p\le \frac{1}{1+d}$ and an $n$ qudit Werner state is separable iff $p\le \frac{1}{1+d^{n-1}}$. Our results concur with those of \cite{rubin}. This approach leads to a computable measure of entanglement for pure multiparty states of arbitrary dimensions.

\begin{figure}
\centering
\begin{minipage}{.5\textwidth}
  \centering
  \includegraphics[width=.5\linewidth]{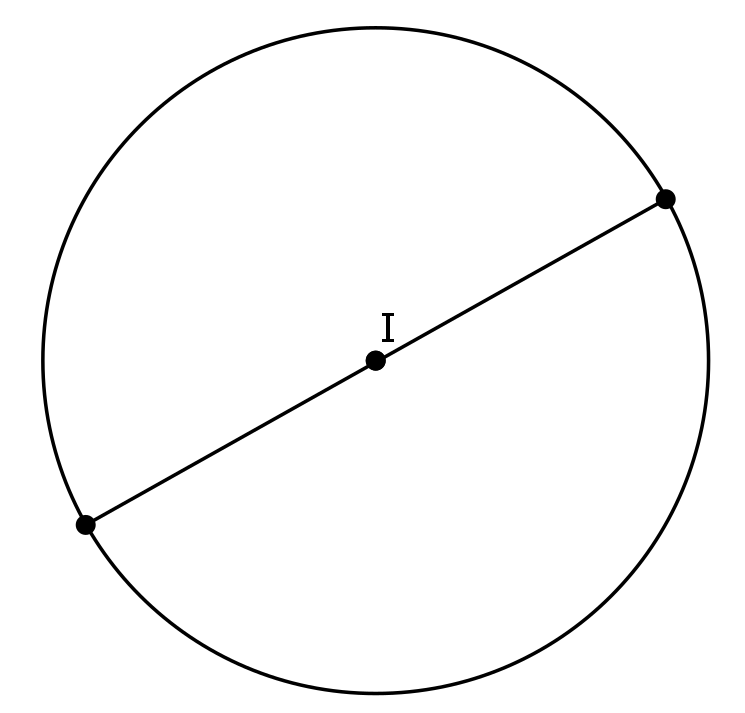}
  \caption{Orthogonal basis represented by a 1-simplex i.e., diameter of the Bloch ball in the $N=2$ case.}
  \label{fig:test1}
\end{minipage}%
\hfill
\begin{minipage}{.5\textwidth}
  \centering
  \includegraphics[width=.5\linewidth]{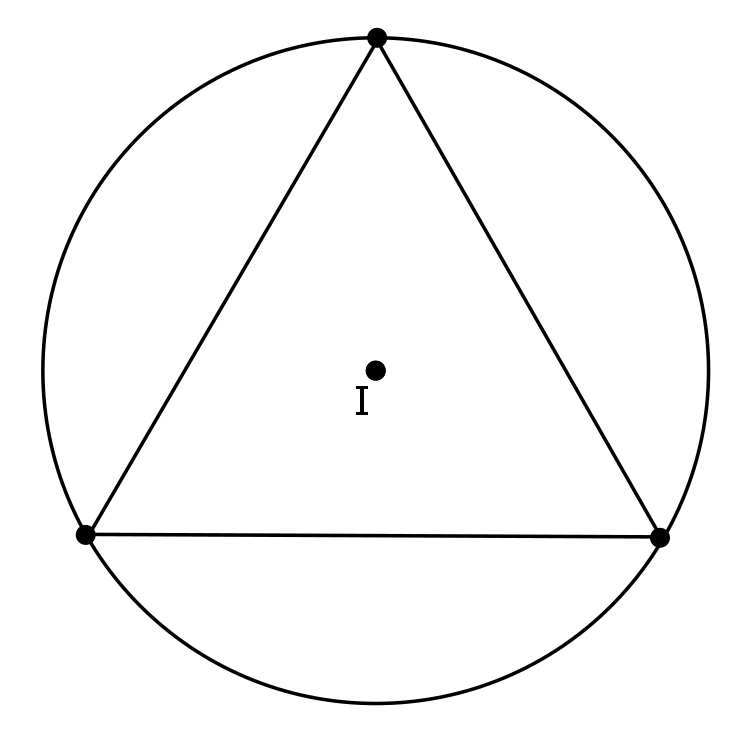}
  \caption{Orthogonal basis represented by a 2-simplex i.e., equilateral triangle in the $N=3$ case.}
  \label{fig:test2}
\end{minipage}
\begin{minipage}{.5\textwidth}
  \centering
  \includegraphics[width=.5\linewidth]{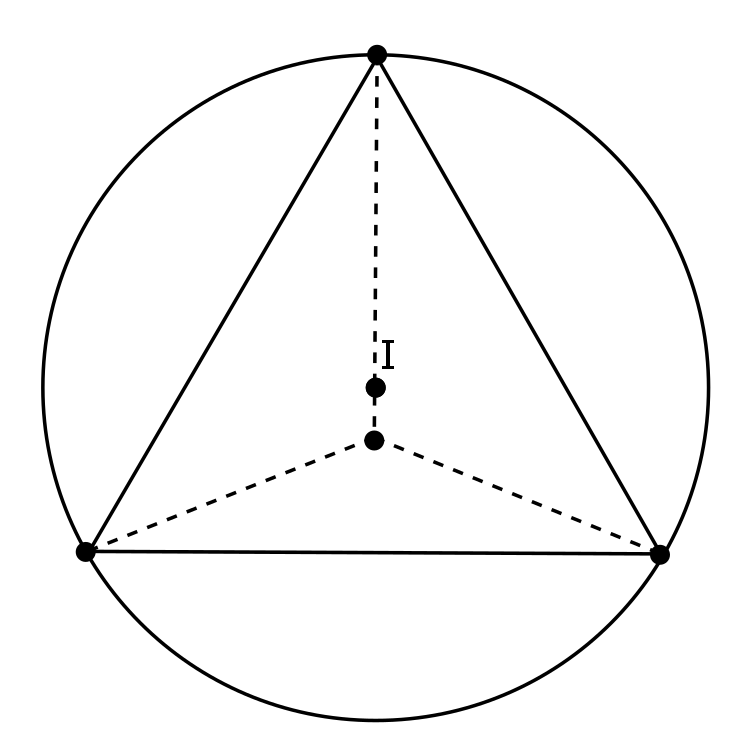}
  \caption{Orthogonal basis represented by a 3-simplex i.e., regular tetrahedron in the $N=4$ case.}
  \label{fig:test3}
\end{minipage}
\end{figure}

\section{Bipartite Werner states}

An $N\times N$ Werner state is a convex mixture of the $N\times N$ maximally entangled state and the  normalized identity of order $N^2$ i.e. $\rho_W= p\ket{\Psi}\bra{\Psi}+\frac{1-p}{N^2}\mathbb{I}$, where $\ket{\Psi}$ is the maximally entangled $N\times N$ state. Our aim is to determine the maximum fraction $p$ of the maximally entangled state, for which $\rho_W$ is separable. We first determine the maximum value of $p$ for the well known two qubit Werner state, by making use of local measurement and applying geometric constraint to the reduced density matrix of the unmeasured qubit. We then generalize the result to bipartite Werner states of arbitrary dimension.\\
\\
Consider a pure, separable two qubit state $\rho$ which can be expressed as the Kronecker product of the reduced density matrices of the constituent qubits i.e. $\rho=\rho_A\otimes\rho_B$. A projection of the qubit $A$ onto $M_A$ i.e. a local measurement, with the identity of order 2 acting on the qubit $B$ is expressed as $M\rho M^{\dagger}$, where $M=M_A\otimes\mathbb{I}_2$. The normalized post measurement state is given by $\sigma=\frac{\rho^{'}}{\textrm{Tr}(\rho^{'})}$, where $\rho^{'}=M\rho M^{\dagger}$. Clearly, the qubit $B$ is unaffected by the projection of $A$ onto $M_A$ i.e. $\textrm{Tr}_A(\sigma)=\sigma_B=\rho_B$ for all $M_A$, because there is zero entanglement between $A$ and $B$. This is not true in general, however. For example, consider a local measurement of one of the qubits of the separable state $\rho=\frac{1}{2}\ket{00}\bra{00}+\frac{1}{2}\ket{11}\bra{11}$, with the identity of order 2 acting on the other. If the qubit $A$ is projected onto a state $a\ket{0}+b\ket{1}$ with $|a|^2+|b|^2=1$, the post-measurement reduced density matrix of the qubit $B$ is given by $a^2\ket{0}\bra{0}+b^2\ket{1}\bra{1}$, which is a convex mixture of the orthogonal states $\ket{0}\bra{0}$ and $\ket{1}\bra{1}$. This implies that the post-measurement reduced density matrices of $B$ lie on a common diameter of the Bloch ball, whose endpoints are $\ket{0}\bra{0}$ and $\ket{1}\bra{1}$, for all $a,b\in\mathbb{C}$. These observations can be generalized as follows. Let $\rho$ be a separable density matrix of order $N$. Then, $\rho$ can be expressed as $\rho=\sum_i a_i\rho_i$ where $\sum_i a_i =1$ and the $\rho_i$ are pure separable states. The matrices $\{\rho_i\}$ form a \emph{basis}, not necessarily orthogonal. We consider a bipartition $A/B$ of $\rho$ such that $A$ is projected onto some state $M_A$ with the identity of appropriate dimension acting on $B$ i.e., $M\rho M^{\dagger}=M(\sum_i a_i\rho_i)M^{\dagger}=\sum_i a_iM\rho_iM^{\dagger}$, where $M=M_A\otimes\mathbb{I}_B$. This is rewritten as $\rho^{'}=\sum_i a_i\rho^{'}_i$, where $\rho^{'}=M\rho M^{\dagger}$ and $\rho^{'}_i=M\rho_i M^{\dagger}$. Taking partial trace over the measured party, $\rho^{'}_B =\sum_i a_i\rho^{'}_{iB}=\sum_i a_i\rho^{'}_{iB}\times\frac{\textrm{Tr}(\rho^{'}_{iB})}{\textrm{Tr}(\rho^{'}_{iB})}$. Since $\{\rho_i\}$ is a basis whose elements are pure separable states, $\frac{\rho^{'}_{iB}}{\textrm{Tr}(\rho^{'}_{iB})}=\rho_{iB}$ for all $i$. Therefore, $\rho^{'}_B = \sum_i a_i\textrm{Tr}(\rho^{'}_{iB})\rho_{iB}$ i.e $\sigma_B=\sum_i b_i\rho_{iB}$ where $\sigma_B=\frac{\rho^{'}_B}{\textrm{Tr}(\rho^{'}_B)}$. Since $\rho_B=\sum_i a_i\rho_{iB}$, both $\sigma_B$ and $\rho_B$ belong to the common convex hull spanned by the elements of the basis $\{\rho_{iB}\}$.\\
\\
Geometrically, a basis of $n$ density matrices of order $N$ is represented by a regular $n$ simplex centered at $\frac{1}{N}\mathbb{I}$ circumscribed by $\mathbb{B}^{N^2-1}$\cite{beng}. If a density matrix $\sigma$ is separable across a bipartition, a local projection of either party onto any pure state $\rho$ with the identity acting on the other reveals that the reduced density matrices of the unmeasured party belong to one or more common simplices spanned by pure states, over all $\rho$. For example, for a pure separable state, the simplex is just a point (0-simplex), for the state $\frac{1}{2}\ket{00}\bra{00}+\frac{1}{2}\ket{11}\bra{11}$, the simplex is a diameter of the Bloch ball (1-simplex) etc.\\
\\
Returning to the two qubit Werner state, a local measurement on any qubit can be expressed as  follows
\begin{equation*}
    M\rho_WM^{\dagger}=pM\ket{\Psi}\bra{\Psi}M^{\dagger}+\frac{1-p}{4}M\mathbb{I}M^{\dagger}
\end{equation*}
where $M=M_A\otimes\mathbb{I}_B$. Denoting $M\rho_WM^{\dagger}$ by $\rho^{'}_W$ and $M\ket{\Psi}\bra{\Psi}M^{\dagger}$ by $\sigma^{'}$, we have $\rho^{'}_W=p\sigma^{'}+\frac{1-p}{4}MM^{\dagger}$. Taking partial trace over the qubit $A$, 
\begin{equation*}
    \rho^{'}_{W_B}=p\sigma^{'}_B+\frac{1-p}{4}\mathbb{I}_2
\end{equation*}
where $\mathbb{I}_2$ is the identity matrix of order 2. $\textrm{Tr}(\rho^{'}_{W_B})=\frac{1}{2}$ independent of $M$. Thus, the post-measurement reduced density matrix of the qubit $B$ can be expressed as $\rho^m_{W_B}=p\sigma_B+\frac{1-p}{2}\mathbb{I}$, where $\rho^m_{W_B}=\frac{\rho^{'}_{W_B}}{\textrm{Tr}(\rho^{'}_{W_B}}$. The post-measurement reduced density matrices of $B$ lie on a sphere of radius $\frac{p}{\sqrt{2}}$ centered at $\frac{1}{2}\mathbb{I}$. In order for the Werner state to be separable, we require that all post measurement reduced density matrices of $B$ lie inside all regular 3-simplices i.e. tetrahedrons circumscribed by the Bloch ball and centered at $\frac{1}{2}\mathbb{I}$. This implies that $p$ should be less than or equal to the minimum ratio in which a tetrahedron divides a radius of the Bloch sphere or, equivalently, the quantity $\frac{p}{\sqrt{2}}$ should be less than or equal to the inradius of a regular tetrahedron inscribed in the Bloch ball. This yields the condition $p\le\frac{1}{3}$ for the two qubit Werner state to be separable.\\
\\
We similarly determine the maximum fraction of the maximally entangled two qutrit state for which the two qutrit Werner state is separable. The set of all density matrices of order 3 is a subset of the 8-dimensional ball of radius $\sqrt{\frac{2}{3}}$ centered at the normalized identity $\frac{1}{3}\mathbb{I}$. However, unlike the Bloch ball, not all points on the surface of the 8-dimensional ball are density matrices. The Werner state can be expressed as $\rho_W=p\ket{\Psi}\bra{\Psi}+\frac{1-p}{9}\mathbb{I}$. We again perform a local measurement on one of the qutrits and apply geometric constraint to the reduced density matrix of the other. The post-measurement reduced density matrix of the unmeasured qutrit can be expressed as $\rho^m_{W_B}=p\sigma_B+\frac{1-p}{3}\mathbb{I}$. Since not all points on the surface of $\mathbb{B}^8$ are density matrices, the $\rho^m_{W_B}$ lie along discrete radii, called \emph{proper radii} of $\mathbb{B}^8$ whose endpoints are $\frac{1}{3}\mathbb{I}$ and density matrices of order 3, and are at a distance of $p\sqrt{\frac{2}{3}}$ from the normalized identity. For separability, we require that all post-measurement reduced density matrices of $B$ lie inside all regular 8-simplices circumscribed by $\mathbb{B}^8$ and centered at $\frac{1}{3}\mathbb{I}$. \\
\\
The maximum fraction $p_{max}$ of the maximally entangled state for which $\rho_W$ is separable is equal to the minimum ratio in which an 8-simplex divides a proper radius of $\mathbb{B}^8$, as shown in figure 5. With reference to figure 6, the ratio $\frac{Ia}{Ib}$ is to be determined. By the property of similar triangles, $\frac{Ib}{Id}=\frac{Ia}{Ic}=2$. Recall that the ratio of the circumradius to the inradius of a regular $N$-simplex is equal to $N$. Thus, $\frac{Ib}{Ic}=8$ i.e. $Ib=8Ic$, and $Ia=2Ic$. This gives $\frac{Ia}{Ib}=\frac{1}{4}$. The maximum value of the fraction $p$ for which the two qutrit Werner state is separable is $\frac{1}{4}$, as is known\cite{cmc}. Extending this method, we find that the $4\times4$ Werner state is separable iff $p\le \frac{1}{5}$, the $5\times5$ Werner state is separable iff $p\le \frac{1}{6}$ and so on. In general,

\begin{lemma}
The maximum fraction of the $N\times N$ maximally entangled state in the $N\times N$ Werner state for which the Werner state is separable is $\frac{1}{N+1}$.
\end{lemma}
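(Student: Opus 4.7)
The plan is to mirror the arguments already carried out for $N=2$ and $N=3$. First, I would apply a rank-one local projection $M = |\phi\rangle\langle\phi|_A \otimes \mathbb{I}_B$ to the Werner state. Writing $|\Psi\rangle = \frac{1}{\sqrt{N}}\sum_i |ii\rangle$, a direct calculation gives
\[
\rho^m_{W_B} = p\sigma_B + \frac{1-p}{N}\mathbb{I},
\]
where $\sigma_B = |\phi^{*}\rangle\langle\phi^{*}|$ is pure. Hence $\rho^m_{W_B}$ sits at Hilbert--Schmidt distance $p\sqrt{(N-1)/N}$ from $\mathbb{I}/N$, along a proper radius of $\mathbb{B}^{N^2-1}$ ending at the pure state $\sigma_B$. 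As $|\phi\rangle$ ranges over all pure states, the post-measurement reduced states cover every proper radius at exactly this distance.

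Next I would invoke the separability criterion developed earlier in the paper: $\rho_W$ is separable across the bipartition iff every such $\rho^m_{W_B}$ lies inside a common regular $(N^2-1)$-simplex with pure-state vertices, circumscribed by $\mathbb{B}^{N^2-1}$ and centered at $\mathbb{I}/N$. Thus $p_{\max}$ equals the minimum ratio in which such an inscribed simplex cuts a proper radius of $\mathbb{B}^{N^2-1}$.

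The heart of the proof is then the geometric computation of this ratio, following the similar-triangles construction used for $N=3$ (Figure 6). Two ingredients suffice: the circumradius-to-inradius ratio of a regular $n$-simplex is $n$, so in the notation of Figure 6 the inscribed regular $(N^2-1)$-simplex satisfies $Ib/Ic = N^2-1$; and the similar-triangles relation between the proper-radius direction and the inradius direction gives $Ia/Ic = N-1$ (extending the values $1$ for $N=2$ and $2$ for $N=3$). Combining these yields
\[
p_{\max} = \frac{Ia}{Ib} = \frac{N-1}{N^2-1} = \frac{1}{N+1}.
\]
The main obstacle is a uniform justification of the $N-1$ similar-triangles factor: this encodes how a proper radius sits relative to the nearest face-normal of an inscribed regular simplex in the worst case, and reflects how the pure-state manifold is embedded inside $\mathbb{B}^{N^2-1}$. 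A clean algebraic cross-check, which I would carry out in parallel, is to fix the inscribed simplex to be one built from a SIC-POVM $\{|e_k\rangle\langle e_k|\}_{k=1}^{N^2}$ (with pairwise overlaps $|\langle e_j|e_k\rangle|^2=1/(N+1)$), decompose $\rho^m_{W_B}$ in this overcomplete basis, demand non-negativity of the mixing coefficients, and take the worst-case pure $\sigma_B$ orthogonal to some $|e_k\rangle$; this directly reproduces the bound $p\le 1/(N+1)$ and exhibits a separable decomposition saturating it, closing the argument.
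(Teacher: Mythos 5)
Your proposal follows essentially the same route as the paper: a rank-one local measurement places the reduced state of the unmeasured qudit at distance $p\sqrt{(N-1)/N}$ from $\frac{1}{N}\mathbb{I}$ along a proper radius, and $p_{\max}$ is identified with the minimum ratio in which a regular $(N^2-1)$-simplex inscribed in $\mathbb{B}^{N^2-1}$ cuts such a radius, evaluated via similar triangles and the circumradius-to-inradius ratio $N^2-1$. The one step you flag as needing uniform justification --- the factor $Ia/Ic = N-1$ --- is exactly what the paper supplies, by computing $Id = \frac{1}{\sqrt{N(N-1)}}$ from $Ib=\sqrt{\frac{N-1}{N}}$ and $bd=\sqrt{\frac{N-2}{N-1}}$ (i.e.\ the inradius of the regular eigenvalue $(N-1)$-simplex) so that $\frac{Id}{Ib}=\frac{Ic}{Ia}=\frac{1}{N-1}$; your SIC-POVM non-negativity cross-check is a genuinely independent algebraic route that the paper does not carry out, and would in fact make the worst-case saturation more explicit.
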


\begin{proof}
We again use the fact that if a local measurement is performed on an $N\times N$ separable Werner state, the reduced density matrices of the unmeasured qudit lie inside all $N^2-1$-simplices circumscribed by $\mathbb{B}^{N^2-1}$, centered at $\frac{1}{N}\mathbb{I}$. The maximum value of $p$ for which the Werner state is separable is equal to the minimum ratio in which the $N^2-1$-simplex divides a radius of $\mathbb{B}^{N^2-1}$ whose endpoints are $\frac{1}{N}\mathbb{I}$ and a density matrix. We employ the similar triangles approach as done for the two qutrit case. In general, $Ib=\sqrt{\frac{N-1}{N}}$ and $bd=\sqrt{\frac{N-2}{N-1}}$ which implies that $Id=\frac{1}{\sqrt{N(N-1)}}$. It is also known that $\frac{Ic}{Ib}=\frac{1}{N^2-1}$ i.e. $Ib=Ic(N^2-1)$. By the similar triangles property, $\frac{Id}{Ib}=\frac{Ic}{Ia}=\frac{1}{N-1}$ i.e. $Ia=Ic(N-1)$. Therefore, we have $\frac{Ia}{Ib}=\frac{N-1}{N^2-1}=\frac{1}{N+1}$.\\
\end{proof}

Since an $N^2-1$ simplex has $N^2$ vertices, a separable mixed state $\rho$ of order $N$ can have upto $N^2$ pure product states in its decomposition. More specifically, a separable full rank density matrix of order $N$ has atleast $N$ and at most $N^2$ pure product state in its decomposition\cite{uhl}. 

\begin{figure}
\centering
\begin{minipage}{.5\textwidth}
  \centering
  \includegraphics[width=.5\linewidth]{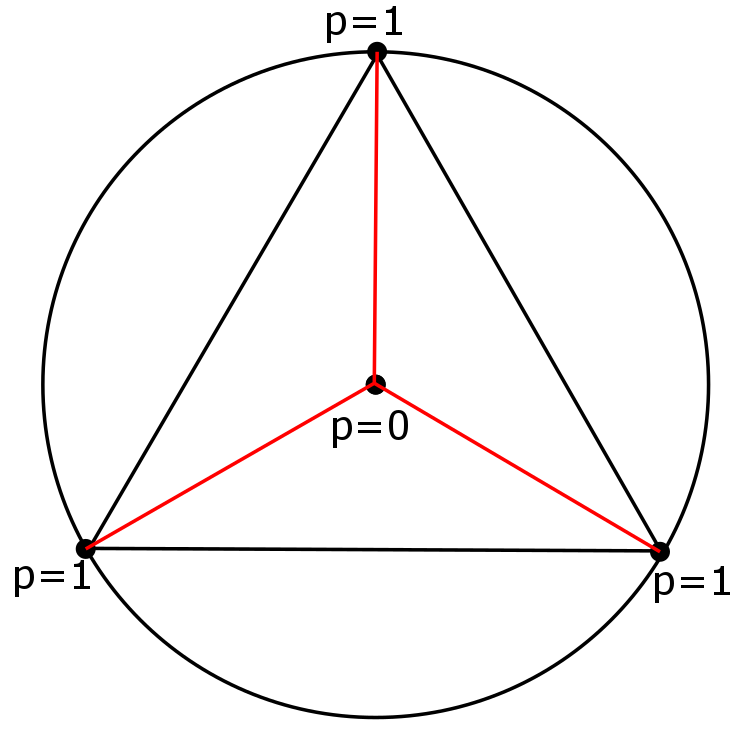}
  \caption{A 2D cross section of $\mathbb{B}^8$. The post local measurement reduced density matrix of  $B$ moves along the red radii as $p$ varies from 0 to 1.}
  \label{fig:test4}
\end{minipage}%
\hfill
\begin{minipage}{.5\textwidth}
  \centering
  \includegraphics[width=.5\linewidth]{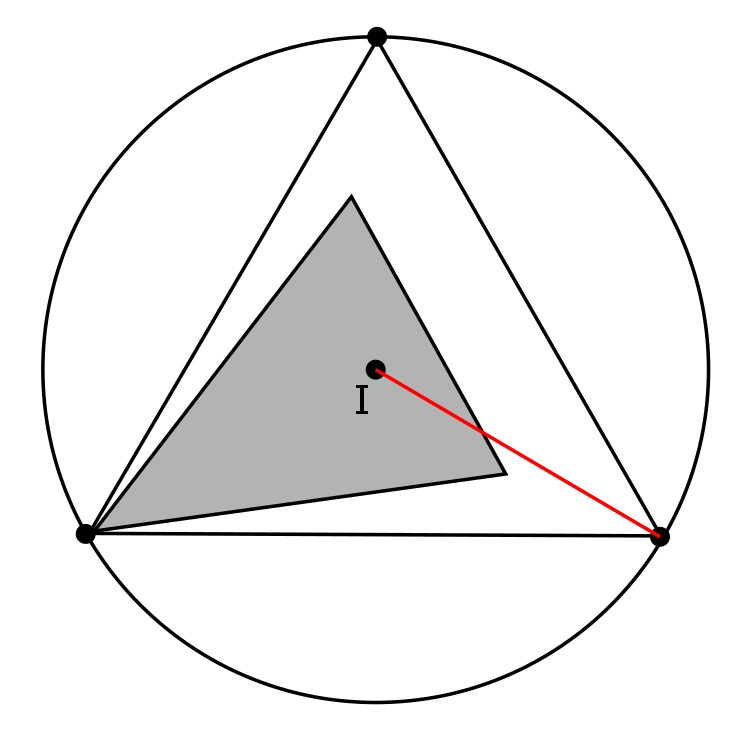}
  \caption{The grey region is a cross section of an 8-simplex and the ratio in which the boundary of the gray region divides the radius is the maximum value of $p$ for which $\rho_W$ is separable.}
  \label{fig:test5}
\end{minipage}
\hfill
\begin{minipage}{.5\textwidth}
  \centering
  \includegraphics[width=.5\linewidth]{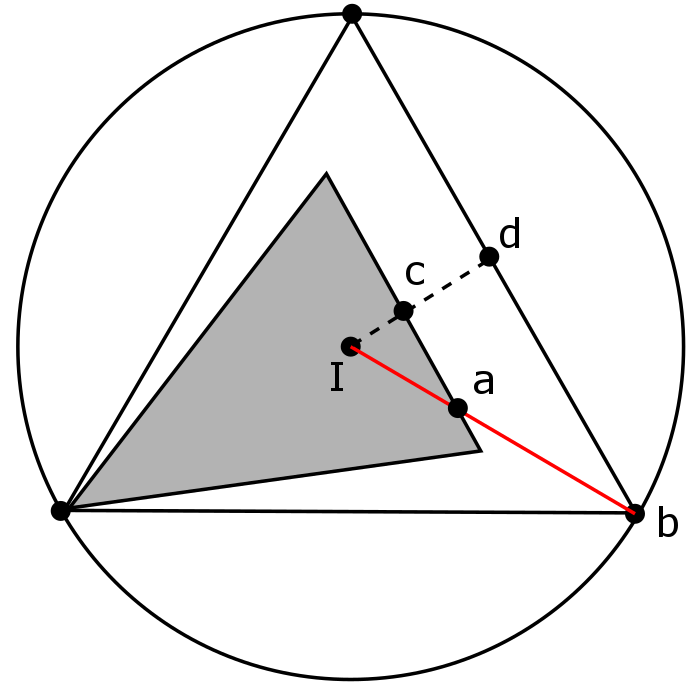}
  \caption{$\frac{Ia}{Ib}$ is the minimum ratio in which an 8-simplex divides a proper radius of $\mathbb{B}^8$.}
  \label{fig:test6}
\end{minipage}
\end{figure}



\section{Multiparty Werner states}
 
 Having derived the necessary and sufficient condition for a bipartite $N\times N$ Werner state to be separable, the natural question would be if the separability criterion for multiparty Werner states can be derived similarly, using geometric constraints.\\
 We begin by determining the condition for separability of the three qubit Werner state and then generalize the result to an $n$-qudit setting.\\
 \\
 The three qubit Werner state is a mixture of the $GHZ$ state and the normalized identity matrix of order 8 i.e. $\rho_W = p\ket{GHZ}\bra{GHZ}+\frac{1-p}{8}\mathbb{I}$. Upon performing a local measurement on the $GHZ$ state on any one of the qubits, it is clear that the probability corresponding to the measurement and distance between the pre and post local measurement reduced density matrices of the joint system of the unmeasured qubits are $\frac{1}{2}$ and $\frac{1}{\sqrt{2}}$ respectively, independent of the measurement operator. The $GHZ$ state thus has maximal entanglement across each bipartition, equal in magnitude to that of the Bell state. The $GHZ$ state is thus the maximally entangled three qubit state. Interestingly, although there exists maximal entanglement between one qubit and the joint system of the other two, there exists \emph{zero} entanglement between any two qubits of the $GHZ$ state.\\
 \\
 Our aim is again to determine the maximum value of the fraction $p$ for which $\rho_W$ is separable. For convenience, we label the qubits $A$, $B$ and $C$. Let $M_A$ be the pure state on to which the qubit $A$ is projected, with the identity acting on the joint system $BC$.
 \begin{equation*}
     M\rho_W M^{\dagger}= pM\ket{GHZ}\bra{GHZ}M^{\dagger}+ \frac{1-p}{8}M\mathbb{I}M^{\dagger}
 \end{equation*}
 where $M=M_A\otimes\mathbb{I}_{BC}$. Denoting $M\rho_W M^{\dagger}$ by $\rho^{'}_W$ and $M\ket{GHZ}\bra{GHZ}M^{\dagger}$ by $\sigma^{'}$, we have $\rho^{'}_W= p\sigma^{'}+ \frac{1-p}{8}MM^{\dagger}$. Taking partial trace over qubit $A$,
 \begin{equation*}
     \rho^{'}_{W_{BC}} = p\sigma^{'}_{BC} + \frac{1-p}{8}\mathbb{I}_4
 \end{equation*}
 where $\mathbb{I}_4$ is the identity matrix of order 4. $\textrm{Trace}(\rho^{'}_{W_{BC}})=\frac{1}{2}$, independent of $M$. The normalized post-measurement reduced density matrix of $BC$ can be expressed as $\rho^m_{W_{BC}}=p\sigma_{BC}+\frac{1-p}{4}\mathbb{I}_4$, where $\sigma_{BC}= 2\sigma^{'}_{BC}$. Now, the distance between $\rho^m_{W_{BC}}$ and $\frac{1}{4}\mathbb{I}$ is $p\sqrt{\frac{3}{4}}$, also independent of $M$, implying that all post-measurement reduced density matrices of the joint system $BC$ are equidistant from the center. Thus, in order to determine the maximum value of $p$ for which $\rho_W$ is separable, we must determine the minimum ratio in which a 15-simplex circumscribed by $\mathbb{B}^{15}$ divides a proper radius of $\mathbb{B}^{15}$. From the results of the previous section, it can be concluded that $p\le\frac{1}{5}$ for $\rho_W$ to be separable i.e., the maximum fraction of the $GHZ$ state in the three qubit Werner state for which the Werner state is separable, is $\frac{1}{5}$.\\
 \\
 Consider an $n$-qubit Werner state $\rho_W$ which is a mixture of the $GHZ$ state $\frac{1}{\sqrt{2}}\ket{000...0}+\frac{1}{\sqrt{2}}\ket{111...1}$ and the normalized identity of order $2^n$. A bipartition of the $GHZ$ state consists of two parties $A$ and $B$ of $m$ and $k$ qubits such that $n=m+k$. There is non-zero and maximal entanglement across the bipartition iff $m=1, k=n-1$ or $m=n-1, k=1$. Therefore, to determine the separability criterion for $\rho_W$, it is sufficient to perform a local measurement on $\rho_W$  on any one qubit and apply geometric constraints to the post measurement reduced density matrix of the joint system of $n-1$ qubits.\\
 Just as in the three qubit case, the post measurement reduced density matrices of the unmeasured system of $n-1$ qubits are equidistant, at a distance of $p\sqrt{\frac{2^{n-1}-1}{2^{n-1}}}$ from the center $\frac{1}{2^n-1}\mathbb{I}$, where $p$ is the fraction of the $GHZ$ state in the Werner state. Tha maximum value of $p$ for which $\rho_W$ is separable is equal to the minimum ratio in which the $2^{2(n-1)}-1$ simplex circumscribed by $\mathbb{B}^{2^{2(n-1)}-1}$ divides a proper radius. Thus, substituting $N=2^{n-1}$ in Lemma 3, we make the following observation
 \begin{observation}
 The maximium fraction of the $n$-qubit $GHZ$ state in the $n$-qubit Werner state, for which the Werner state is separable, is $\frac{1}{2^{n-1}+1}$. 
 \end{observation}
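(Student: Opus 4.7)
The plan is to extend the three-qubit argument of the preceding paragraph to arbitrary $n$ and then reduce the problem to an application of Lemma~1 in dimension $N=2^{n-1}$. The first ingredient I would record is that the $n$-qubit $GHZ$ state is entangled across a bipartition $(m,k)$ with $m+k=n$ only when $\{m,k\}=\{1,n-1\}$: tracing out any proper subset of two or more qubits leaves the diagonal mixture $\tfrac12\ket{0\cdots 0}\bra{0\cdots 0}+\tfrac12\ket{1\cdots 1}\bra{1\cdots 1}$ on the remaining qubits, which is a separable state. Consequently, to rule out separability of $\rho_W$ it suffices to analyze what a local measurement on a single qubit does to the joint state of the other $n-1$, exactly as in the three-qubit case.

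Next I would carry out the same local-projection calculation as in the $n=3$ setting. Writing $\rho_W=p\ket{GHZ}\bra{GHZ}+\frac{1-p}{2^n}\mathbb{I}$ and conjugating by $M=M_A\otimes\mathbb{I}_{2^{n-1}}$ where $M_A$ is a rank-one projector on qubit $A$, taking the partial trace over $A$, and normalizing, gives a post-measurement reduced density matrix of the form
\begin{equation*}
\rho^m_{W_{B_1\cdots B_{n-1}}} = p\,\sigma_{B_1\cdots B_{n-1}}+\frac{1-p}{2^{n-1}}\mathbb{I}_{2^{n-1}},
\end{equation*}
where $\sigma_{B_1\cdots B_{n-1}}$ is a pure state on the $n-1$ unmeasured qubits (the contribution of the $GHZ$ branch). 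Since $\sigma_{B_1\cdots B_{n-1}}$ is pure, its Hilbert--Schmidt distance from $\frac{1}{2^{n-1}}\mathbb{I}$ equals $\sqrt{(2^{n-1}-1)/2^{n-1}}$, so the post-measurement state lies on a sphere of radius $p\sqrt{(2^{n-1}-1)/2^{n-1}}$ centered at $\frac{1}{2^{n-1}}\mathbb{I}$, independently of $M_A$. Combined with the fact that the trace of the unnormalized post-measurement state is $\tfrac12$ for every choice of $M_A$ (a direct consequence of the maximally mixed marginal of the $GHZ$ state on the measured qubit), this gives the analogue of the sphere of reduced states obtained in the bipartite $N\times N$ case.

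Once this is in place the geometric constraint of the preceding sections applies verbatim: separability of $\rho_W$ across the $(1,n-1)$ cut requires that this entire sphere of post-measurement marginals sit inside every regular $(2^{2(n-1)}-1)$-simplex centered at $\frac{1}{2^{n-1}}\mathbb{I}$ and circumscribed by $\mathbb{B}^{2^{2(n-1)}-1}$ whose vertices are pure product states of the remaining $n-1$ qubits. The maximum $p$ compatible with this is exactly the minimum ratio in which such a simplex cuts a proper radius of $\mathbb{B}^{2^{2(n-1)}-1}$; this is computed by Lemma~1 with $N$ replaced by $2^{n-1}$, giving $p\le \frac{1}{2^{n-1}+1}$.

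The step I expect to require the most care is the measurement-independence claim: showing that projecting qubit $A$ onto an arbitrary normalized $a\ket{0}+b\ket{1}$ yields a conditional state on the remaining $n-1$ qubits that is pure, and that the probability factor cancels so that the sphere radius is genuinely $p\sqrt{(2^{n-1}-1)/2^{n-1}}$ for every $(a,b)$. This follows from the Schmidt form $\tfrac{1}{\sqrt 2}(\ket{0}_A\ket{0\cdots 0}_{B_1\cdots B_{n-1}}+\ket{1}_A\ket{1\cdots 1}_{B_1\cdots B_{n-1}})$, which forces every measurement branch on $A$ to leave a pure state on the remaining qubits, but it is the one calculation that must be written out. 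Everything else is then a direct invocation of Lemma~1.
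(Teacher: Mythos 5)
Your proposal follows essentially the same route as the paper: reduce to the $(1,n-1)$ bipartition, perform a local measurement on one qubit, observe that the post-measurement reduced state of the remaining $n-1$ qubits lies at distance $p\sqrt{(2^{n-1}-1)/2^{n-1}}$ from the normalized identity independently of the measurement, and invoke the simplex-radius ratio of Lemma 1 with $N=2^{n-1}$. Your explicit justification of the purity of the conditional state and the constant branch probability via the Schmidt form of the $GHZ$ state is a detail the paper only asserts by analogy with the three-qubit case, but the argument is the same.
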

 
 We now generalize these results to a system of $n$ particles, each of dimension $d$ i.e. an $n$-qudit system. An $n$ qudit Werner state is a mixture of the maximally entangled $n$-qudit state $\sum_{i=1}^n \frac{1}{\sqrt{d}}\ket{iii...i}$ and the normalized identity of order $d^n$ i.e. $\rho_W = p\ket{\Psi}\bra{\Psi}+\frac{1-p}{d^n}\mathbb{I}$. We denote by $A$ the qudit to be measured and by $B$ the joint system of the $n-1$ qudits on which the identity of order $d^{n-1}$ acts. Let $M=M_A\otimes\mathbb{I}_B$ be the measurement operator. Then
 \begin{equation*}
     M\rho_W M^{\dagger} = pM\ket{\Psi}\bra{\Psi}M^{\dagger}+ \frac{1-p}{d^n}M\mathbb{I}M^{\dagger}
 \end{equation*}
Let $M\rho M^{\dagger}=\rho^{'}_W$ and $M\ket{\Psi}\bra{\Psi}M^{\dagger}=\sigma^{'}$. Taking partial trace over the qudit $A$, we have 
\begin{equation*}
    \rho^{'}_{W_B} = p\sigma^{'}_B +\frac{1-p}{d^n}\mathbb{I}_B
\end{equation*}
 where $\mathbb{I}_B$ is the normalized identity matrix of order $d^{n-1}$. The probability corresponding the measurement is $\frac{1}{d}$, independent of $M$. The normalized post-measurement reduced density matrix of the unmeasured system can be expressed as $\rho^m_{W_B}=p\sigma_B+\frac{1-p}{d^{n-1}}\mathbb{I}$, where $\sigma_B=d\sigma^{'}_B$. The post-measurement reduced density matrices of $B$ are at a distance of $p\sqrt{\frac{d^{n-1}-1}{d^{n-1}}}$, independent of $M$. Therefore, the maximum value of $p$ for which $\rho_W$ is separable is equal to the minimum ratio in which a $d^{2(n-1)}-1$ simplex circumscribed by $\mathbb{B}^{d^{2(n-1)}-1}$ divides a proper radius. We formally state the necessary and sufficient condition for an $n$-qudit Werner state to be separable
 \begin{lemma}
 The maximum fraction of the maximally entangled $n$-qudit state in the $n$-qudit Werner state for which the Werner state is separable is $\frac{1}{d^{n-1}+1}$
 \end{lemma}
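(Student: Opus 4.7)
The plan is to reduce the $n$-qudit claim to the bipartite case of Lemma~1 by treating the $n-1$ unmeasured qudits as a single composite party $B$ of dimension $d^{n-1}$. I would reuse the calculation carried out immediately before the statement: a projection of qudit $A$ onto an arbitrary pure state $M_A$, with the identity on $B$, produces the normalized post-measurement reduced density matrix
\begin{equation*}
\rho^m_{W_B} \;=\; p\,\sigma_B + \frac{1-p}{d^{n-1}}\,\mathbb{I}_B,
\end{equation*}
where $\sigma_B$ is a pure state of $B$ and $\mathbb{I}_B$ is the identity of order $d^{n-1}$. The measurement probability $1/d$ and the Hilbert--Schmidt distance $p\sqrt{(d^{n-1}-1)/d^{n-1}}$ between $\rho^m_{W_B}$ and $\frac{1}{d^{n-1}}\mathbb{I}_B$ are both independent of $M_A$, so as $M_A$ varies the image traces out points on a sphere of that radius around the maximally mixed state of $B$, all lying on proper radii of $\mathbb{B}^{d^{2(n-1)}-1}$.

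Next I would apply the geometric separability criterion of Section~II to $B$. If $\rho_W$ is separable, then in particular it is separable across the $A|B$ cut, and the local-measurement argument forces every $\rho^m_{W_B}$ to lie inside every regular $(d^{2(n-1)}-1)$-simplex circumscribed by $\mathbb{B}^{d^{2(n-1)}-1}$ and centered at $\frac{1}{d^{n-1}}\mathbb{I}_B$ whose vertices are pure product states of $B$. Equivalently, the distance $p\sqrt{(d^{n-1}-1)/d^{n-1}}$ must not exceed the minimum length in which such a simplex cuts a proper radius. To compute that threshold I would invoke Lemma~1 with the substitution $N \mapsto d^{n-1}$: the similar-triangles construction used there (figure~6) depends only on the dimension of the local Hilbert space, so it still yields $Ia/Ib = (N-1)/(N^2-1) = 1/(N+1)$. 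Setting $N = d^{n-1}$ gives the necessary bound $p \le \frac{1}{d^{n-1}+1}$, and sufficiency follows by choosing the simplex vertices as tensor products of local orthonormal bases, which furnishes an explicit fully separable decomposition saturating the bound.

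The step I expect to be most delicate is identifying the $A|B$ bipartite separability that is directly certified by the local-measurement argument with the full $n$-partite separability claimed by the lemma. This gap is closed by exploiting the permutation symmetry of the maximally entangled state $\sum_i \ket{ii\cdots i}/\sqrt{d}$ and of the maximally mixed state: the same threshold is produced regardless of which qudit is measured, and the decomposition realizing the bound is built from fully factorizable vertices, so the weakest bipartite constraint coincides with full separability. The remaining pieces---verifying the distance formula and transplanting Lemma~1 to local dimension $d^{n-1}$---are routine.
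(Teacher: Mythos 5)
Your proposal follows essentially the same route as the paper: a local projection on one qudit, the observation that the normalized post-measurement reduced state of the remaining $d^{n-1}$-dimensional party sits at distance $p\sqrt{(d^{n-1}-1)/d^{n-1}}$ from the normalized identity independent of the measurement, and the application of the simplex-ratio result of Lemma~1 with $N=d^{n-1}$ to obtain $p\le\frac{1}{d^{n-1}+1}$. Your added care in bridging bipartite $A|B$ separability to full $n$-partite separability (via permutation symmetry and product-state simplex vertices) is a reasonable tightening of a point the paper treats only briefly, but it does not change the substance of the argument.
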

 
We have thus derived geometrically the necessary and sufficient criterion for the separability of general Werner states, as promised. Similarly, the separability across any bipartition of a density matrix can be tested. 
 
 \section{Entanglement measure for pure states}

The ideas developed in the preceding sections can be extended to compute the entanglement across any bipartition of a pure, multiparty state of arbitrary dimension. Let $\sigma=\ket{\Phi}\bra{\Phi}$ be a density matrix of order $N$. A density matrix lying on the radius of $\mathbb{B}^{N^2-1}$ whose endpoints are $\sigma$ and $\frac{1}{N}\mathbb{I}$ can be expressed as $\rho=p\sigma+\frac{1-p}{N}\mathbb{I}$, where $0\le p\le 1$. All such density matrices have rank $N$ for $p<1$. Suppose the entanglement across some bipartition $A/B$ of $\sigma$ is to be computed. We first determine the maximum value of $p$ for which $\rho$ is separable, by performing a local measurement on $A$. Note that a separable density matrix of order $N$ may lie inside a unique $N$ simplex spanned by separable states but it lies inside infinitely many $N^2-1$ simplices since an $N^2-1$ simplex represents a non-orthogonal basis. Let $M=M_A\otimes\mathbb{I}_B$ be the measurement operator acting on $\rho$. Then, $M\rho M^{\dagger}=pM\sigma M^{\dagger}+\frac{1-p}{N}M\mathbb{I}M^{\dagger}$. Following the usual notation, $\rho^{'}_B=p\sigma^{'}_B+\frac{1-p}{N}\mathbb{I}_B$. Let the subsystem $A$ have dimension $d$. Then $\rho^{'}_B=p\sigma^{'}_B\times\frac{\textrm{Tr}(\sigma^{'}_B}{\textrm{Tr}(\sigma^{'}_B}+\frac{1-p}{N}\mathbb{I}\times\frac{\frac{N}{d}}{\frac{N}{d}}=p\textrm{Tr}(\sigma^{'}_B)\pi_B+\frac{1-p}{d}\mathbb{I}_n$, where $\pi_B=\frac{\sigma^{'}_B}{\textrm{Tr}(\sigma^{'}_B}$ and $\mathbb{I}_n$ is the normalized identity matrix of order $\frac{N}{d}$. $\textrm{Tr}(\rho^{'}_B)=pq+\frac{1-p}{d}$ where $q=\textrm{Tr}(\sigma^{'}_B)$. The post-measurement reduced density matrix of $B$ can thus be expressed as $\rho^m_B=P\pi_B+\frac{d(1-P)}{N}\mathbb{I}$. We require that $\rho^m_B$ lie inside all (infinitely many) $(\frac{N}{d})^2-1$ simplices that contain $\rho_B$, for $\rho$ to be separable across the bipartition $A/B$. This implies the following condition
\begin{equation*}
    P=\frac{pq}{pq+\frac{1-p}{d}}\le\frac{1}{1+\frac{N}{d}}
\end{equation*}
which upon simplification yields a condition on $p$, as desired.
\begin{equation*}
    p\le\frac{1}{1+Nq}
\end{equation*}
The maximum value of the fraction $p$ for which $\rho$ is separable is therefore $p_{max}=\frac{1}{1+Nq_{min}}$. If $\sigma$ is a separable state then $q_{min}=0$ which corresponds to the projection of $\sigma_A$ onto a state orthogonal to it, implying that $p\le 1$ i.e. $\rho$ is separable for all $p$. If $\sigma$ is a maximally entangled $n$-qudit state then $N=d^n$ and $q=q_{min}=\frac{1}{d}$ independent of the state onto which $\sigma_A$ is projected, which gives $p_{max}=\frac{1}{1+d^{n-1}}$, as earlier determined.\\
\\
Given any pure state $\sigma$, the maximum value of $p$ for which $p\sigma+\frac{1-p}{N}\mathbb{I}$ is separable is an indicator of the entanglement across some bipartition of $\sigma$. In fact, the quantity
\begin{equation*}
    e=1-p_{max}
\end{equation*}
is a computable measure of entanglement across any bipartition of a multiparty pure state. $e=0$ for pure, separable states since $p_{max}=1$. As an example, consider the $\ket{W}$ state which, in the computational basis can be expressed as $\ket{W}=\frac{1}{\sqrt{3}}\ket{001}+\frac{1}{\sqrt{3}}\ket{010}+\frac{1}{\sqrt{3}}\ket{100}$. Like the $GHZ$ state, the $W$ state is invariant under permutations of qubits, hence there is equal entanglement across all bipartitions of the $W$ state. The maximum value of $p$ for which the state $p\ket{W}\bra{W}+\frac{1-p}{8}\mathbb{I}$ is separable, is to be determined. The minimum probability of projection of any qubit onto a pure state, over all projections is found to be $q_{min}=\frac{1}{3}$. This gives $p_{max}=\frac{1}{1+Nq_{min}}=\frac{3}{11}$. Thus, the entanglement across every bipartition of the $W$ state is equal to $e=1-p_{max}=\frac{8}{11}$, which is slightly less than the corresponding value of $\frac{4}{5}$ for the $GHZ$ state. However, there is \emph{zero} entanglement present between every pair of qubits of the $GHZ$ state while there is non-zero entanglement present between every pair of qubits of the $W$ state. The correlations between constituents of a multiparty state can be represented by a \emph{graph} or even a \emph{hypergraph}\cite{hyper}. These representations provide a deeper understanding of the `structure' of entanglement in a multiparty state.\\ 
\\
From the geometry of the set of density matrices, it is clear that if a density matrix is close enough to the normalized identity, it is guaranteed to be separable\cite{vss}. What is the maximum distance a density matrix of order $N$ can be from $\frac{1}{N}\mathbb{I}$ so that it is guaranteed to be separable? The distance of any density matrix $\rho$ of order $N$ from $\frac{1}{N}\mathbb{I}$ i.e. the quantity $||\rho-\frac{1}{N}\mathbb{I}||$ can be expressed as $p\sqrt{\frac{N-1}{N}}$ where $0\le p\le 1$. The maximum value of $p$ such that any density matrix $\rho$ that satisfies $||\rho-\frac{1}{N}\mathbb{I}||\le p\sqrt{\frac{N-1}{N}}$ is necessarily separable, is to be determined. For a pure state $\sigma$, the maximum value of $p$ for which $p\sigma+\frac{1-p}{N}\mathbb{I}$ is separable across a bipartition is given by $p_{max}=1-e$ where $e$ is the entanglement across the bipartition. The minimum value of $p_{max}$ occurs when $e$ is maximized i.e. when $\sigma$ is a maximally entangled state. In an $n$-qudit setting, the minimum value of $p_{max}$ is thus $\frac{1}{1+d^{n-1}}$. This implies the following important observation

\begin{observation}
All $n$-qudit density matrices that are within $\frac{1}{1+d^{n-1}}\sqrt{\frac{d^n-1}{d^n}}$ distance of the normalized identity $\frac{1}{d^n}\mathbb{I}$ are separable.
\end{observation}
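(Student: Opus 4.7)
The plan is to reduce the observation to the entanglement-measure inequality established in Section~4: for any pure $n$-qudit state $\sigma$ the mixture $p\sigma+\tfrac{1-p}{d^n}\mathbb{I}$ is separable whenever $p\le p_{\max}(\sigma)=\tfrac{1}{1+d^{n}q_{\min}(\sigma)}$, and the minimum of $p_{\max}$ over all pure $\sigma$ equals $\tfrac{1}{1+d^{n-1}}$, attained on maximally entangled states. The displacement of such a radial mixture from the center has Hilbert-Schmidt norm $p\sqrt{(d^n-1)/d^n}$, so substituting the worst-case $p=\tfrac{1}{1+d^{n-1}}$ yields exactly the quoted radius $r_0:=\tfrac{1}{1+d^{n-1}}\sqrt{(d^n-1)/d^n}$. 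Thus every radial segment from $\tfrac{1}{d^n}\mathbb{I}$ toward a pure state is separable up to distance $r_0$.

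Given an arbitrary density matrix $\rho$ with $\lVert\rho-\tfrac{1}{d^n}\mathbb{I}\rVert\le r_0$, I would use its spectral decomposition $\rho=\sum_i\lambda_i\ket{\psi_i}\bra{\psi_i}$ to express $\rho$ as a convex combination of states of the form $p\ket{\psi_i}\bra{\psi_i}+\tfrac{1-p}{d^n}\mathbb{I}$ with $p\le\tfrac{1}{1+d^{n-1}}$. Each of these is separable by the worst-case bound above, and convexity of the separable set then delivers separability of $\rho$.

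The main obstacle is the geometric fact that for $d^n\ge 3$ the outer boundary of the density-matrix set is not the sphere of radius $\sqrt{(d^n-1)/d^n}$, so the radial parameterization by pure states does not automatically cover every direction of displacement from the center. The hard part is therefore to argue that the worst-case direction for separability is always the one pointing toward a maximally entangled pure state, so that the reach $r_0$ along that direction bounds the separable reach along every other direction. Once this monotonicity is in place, the Section~4 inequality implies that the entire Hilbert-Schmidt ball of radius $r_0$ centered at $\tfrac{1}{d^n}\mathbb{I}$ is contained in the separable set.
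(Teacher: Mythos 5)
Your first paragraph is an accurate reconstruction of the paper's entire proof: the paper derives $p_{\max}(\sigma)=\frac{1}{1+Nq_{\min}}$ for radial mixtures $p\sigma+\frac{1-p}{N}\mathbb{I}$ through a pure state $\sigma$, minimizes over $\sigma$ to get $\frac{1}{1+d^{n-1}}$ at maximally entangled states, converts $p$ into the distance $p\sqrt{(d^n-1)/d^n}$, and then simply asserts the observation. Consequently, the obstacle you name in your closing paragraph --- that a generic $\rho$ in the ball does not lie on a radius through a pure state, so the radial bound does not automatically control every direction --- is a genuine gap, and it is not one you failed to extract from the paper: the paper does not address it either.

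Your proposed repair via the spectral decomposition, however, does not close this gap. Writing $\rho=\sum_i\mu_i\bigl(p_i\ket{\psi_i}\bra{\psi_i}+\frac{1-p_i}{N}\mathbb{I}\bigr)$ with $N=d^n$ is the same as writing $\rho=(1-T)\frac{1}{N}\mathbb{I}+T\tau$ with $\tau$ a density matrix and $T=\sum_i\mu_ip_i$; demanding $p_i\le\frac{1}{1+d^{n-1}}$ forces $T\le\frac{1}{1+d^{n-1}}$ and hence $\lambda_{\min}(\rho)\ge\frac{1-T}{N}\ge\frac{1}{N}\cdot\frac{d^{n-1}}{1+d^{n-1}}$. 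That is a constraint on the smallest eigenvalue, not on the Hilbert--Schmidt distance, and the ball of radius $r_0$ contains states that violate it: the state with spectrum $\bigl(0,\frac{1}{N-1},\dots,\frac{1}{N-1}\bigr)$, e.g.\ $\frac{1}{N-1}(\mathbb{I}-\ket{\Psi}\bra{\Psi})$ with $\ket{\Psi}$ maximally entangled, sits at distance $\frac{1}{\sqrt{N(N-1)}}$ from $\frac{1}{N}\mathbb{I}$, which is strictly smaller than $r_0$ whenever $d^n-1>1+d^{n-1}$ (every case except two qubits), yet has $\lambda_{\min}=0$ and so admits no decomposition of your form with $T<1$. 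Put geometrically, your construction certifies separability only of the shrunken body $(1-T)\frac{1}{N}\mathbb{I}+T\mathcal{D}$, whose inscribed ball has radius $r_0/(d^n-1)$, far short of $r_0$. The convexity step itself is sound; the family of separable states you take convex combinations of is simply too small to fill the claimed ball. A complete proof would have to handle non-radial and rank-deficient directions directly, which neither your argument nor the paper's supplies.
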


The set of all density matrices of order $d^n$ whose distance from $\frac{1}{d^n}\mathbb{I}$ is less than or equal to $\frac{1}{1+d^{n-1}}\sqrt{\frac{d^n-1}{d^n}}$ are \emph{absolutely separable} i.e. they cannot become entangled even by the action of global unitary operators\cite{nirman}. \\

\section{Concluding remarks}

To summarize, we determined the necessary and sufficient criterion for the separability of an $n$ qudit Werner state by performing a local measurement on any single qudit and requiring that the reduced density matrices of the unmeasured party lie inside all regular $d^{2(n-1)}-1$ simplices  circumscribed by $\mathbb{B}^{d^{2(n-1)}-1}$ over all measurements. The methods employed can be used to determine separability across any bipartition of a density matrix. This approach leads naturally to a computable measure of entanglement for pure multiparty states. Given any pure density matrix $\sigma$ of order $N$, the maximum value of the fraction $p$ for which $p\sigma +\frac{1-p}{N}\mathbb{I}$ is separable across some bipartition is first determined. Then, the entanglement $e(\sigma)$ across the bipartition is given by $e(\sigma)=1-p_{max}(\sigma)$ and is equivalent to detrmining the distance between $\sigma$ and the separable state closest to $\sigma$ along the radius containing $\sigma$. The maximum distance that a density matrix can be from the normalized identity for it to be guaranteed separable is also determined.\\
Throughout this letter it has been assumed that the set of density matrices of order $N$ admits the $N^2-1$ simplex for all $N$. These simplices are called SIC-POVMs for short and their existence for all $N$ has been conjectured but is known to be true only for $N\le 16$. The fact that the conclusions drawn from our method agree with those of\cite{rubin} strongly indicates that SIC-POVMs always exist. \\
We hope that our work can be extended to arrive at a measure of entanglement for mixed multiparty states and that it is of relevance to various related problems, such as entanglement in continuous variable systems\cite{duan,simon} and motivates further research in the field.\\
\\
AP would like to thank Clive Emary (Newcastle University), Subhayan Roy Moulick (IISER Kolkata) and Ashutosh K. Goswami (IISER Kolkata) for helpful comments and discussions.



\end{document}